\newtheorem{theorem}{Theorem}
\newtheorem{lemma}[theorem]{Lemma}
\newcommand{\qed}{\hfill $\blacksquare$}
\newcommand{\ncr}[2]{\left(
\begin{array}{c}
#1 \\
#2 \\
\end{array} \right)}
\begin{document}

\title{Encoding Universal Computation in the Ground States of Ising Lattices}

\author{Mile Gu}
\affiliation{Center for Quantum Technologies, National University of Singapore, Singapore.}
\author{\'Alvaro Perales}
\affiliation{Computer Engineering Department, Universidad de Alcal\'a, Madrid 28871, Spain.}

\date{\today}


\begin{abstract}

We characterize the set of ground states that can be synthesized by classical $2$-body Ising Hamiltonians. We then construct simple Ising planar blocks that simulates efficiently a universal set of logic gates and connections, and hence any Boolean function.  We therefore provide a new method of encoding universal computation in the ground states of Ising lattices, and a simpler alternative demonstration of the known fact that finding the ground state of a finite Ising spin glass model is NP complete. We relate this with our previous result about emergent properties in infinite lattices.

\end{abstract}


\maketitle

\section{Introduction}

The Physical Church-Turing thesis \cite{Penrose89a} provides a deep connection between the science of computation and the physical universe. It posits that the dynamics of any known physical system can be simulated by a Turing machine \cite{Turing36a}, a theoretical device that consists of a finite state machine together with an infinite tape.  Upon reflection, this is a remarkable result, widely believed to be correct. An arbitrary physical system is governed by a vast variety of different forces, from Coulomb interactions to gravity, and there is no reason, a priori, to suspect that all of these effects can be replicated on one particular machine. This presents the idea of universality: a physical system is universal if its dynamics can be used to simulate any other physical system.

The prevalence of universality in commonly studied systems is not only a theoretical curiosity, but also has consequences of practical significance. Recent results in computer science restrict our ability to predict the behavior of such systems. Observations of universal systems led to the Strong Church-Turing thesis \cite{Bernstein97a}, which postulates that a Turing Machine together with a source of randomness is computationally as powerful as any other existing universal system. Formally speaking, we say that a task lies in P, or is tractable, if the task can be performed efficiently by a Turing Machine, i.e., the time required to perform it scales as a polynomial of the size of the input \cite{Papadimitriou94a}. This thesis then postulates that any problem which lies outside P cannot be solved with resources that scale polynomially with respect to the size of the problem, regardless of the method of computation used. While the existence of Shor's algorithm in quantum mechanics may provide an exception to this thesis \cite{Shor97a}, it applies to all current classical models of computation.

This leads to deep insights into any universal system that simulates a Turing machine efficiently. Suppose such a system simulates a Turing machine operating on an intractable problem as input. If one could efficiently compute every physical property of this system, then one can use it to solve the encoded problem and therefore violate the Strong Church-Turing thesis. Thus, such universal systems must necessarily exhibit properties which no classical algorithm can efficiently compute.

Many other universal systems have been proposed, for example, logic circuits \cite{Nielsen00a}, the Game of Life \cite{Conway82}, Rule 110 \cite{cook04a}, and measurement based quantum computation \cite{Raussendorf01a}.
In addition to these abstract mathematical constructs, many surprisingly simple physical systems capable of universal computation have also been discovered. These include billiard balls \cite{Fred82}, simple dynamical systems \cite{Moore90a} and the dynamics of $3$-dimensional majority voting cellular automata \cite{Moore97a}.

This motivates an interesting question: how simple can a physical system be to still exhibit universality and thus complex behaviour? In particular, we explore what limits can be placed on a class of Hamiltonians such that the evaluation of their ground states still requires the capacity to perform universal computation. We relate this to the ground state decision problem: given a Hamiltonian $H$ and some number $E$, does there exist a state with energy at most $E$?

Interestingly, the ground state decision problem is difficult to solve even for the simple Ising lattice, which is a widely used model to describe collective behaviour in diverse systems, as magnetism \cite{Chandler87a}, lattice gases \cite{Lee87a}, neural activity \cite{Rojas} and even protein folding \cite{Bryngelson91}. While an efficient solution is known in the case of one dimension, F. Barahona showed in 1982 that the computational task is generally NP-complete in higher dimensions \cite{barahona82a}, whenever some of the bonds are antiferromagnetic. Here, NP denotes the class of non-deterministic polynomial time problems; an abstract class of problems whose solutions can be verified, but not necessarily found, in polynomial time. Indeed, this connection has even allowed the engineering of spin lattice Hamiltonians whose ground states help model and study NP-complete problems \cite{Monasson99a}.

The complexity of the ground state decision problem suggested that such ground states could also embed universal computation. Indeed, this was first proven with the adiabatical model of quantum computation, where a simple Hamiltonian with known ground state is adiabatically evolved to the complex Hamiltonian whose ground state encodes the solution to the computational problem \cite{Farhi00a, Farhi01a}. To further simplify the models and make them more suitable to be recreated in real experiments, it has been proven that it is enough to consider just $2$-body interactions in the Hamiltonian to obtain the capability of universal computation \cite{Oliveira08a, Kempe06a, Biamonte08a, Biamonte08b}.\\

In this paper we extend those studies in the classical case and derive a general result on what ground state sets can be synthesized by a $m$-body Hamiltonian on a system of $n$ spins. Using the circuit model of computation, we construct simple designer circuit blocks that can be combined to encode a universal computer in the ground state of $2$-body Ising Hamiltonians, in such a way that there is a map between any given logic circuit to the ground states of some Hamiltonian. This encoding, together with the strong Church-Turing thesis, provides immediate implications on the computational complexity of evaluating such ground states. Furthermore, this allows us to provide a simple alternative proof of Barahona's result that the ground state decision problem is NP-complete \cite{barahona82a}.

We explore the connection of this result with the infinite lattice case we studied in a previous work \cite{Gu09a}. We showed that there are undecidable properties in the infinite Ising model that give rise to emergent properties in the physical Ising lattice. Besides, the circuit blocks presented here simplify the technical parts of that work.\\


This paper is organized as follows. Section \ref{notation} introduces the required background and notation. Section \ref{designer} introduced the ground synthesis problem, whilst Section \ref{universality} gives an alternate proof of the universality of Ising ground states. Section \ref{complexity} explores the consequences in complexity of the computational difficulty of the ground state problem and the the relation with the infinite case and emergence. Section \ref{conclusion} presents the main conclusions.

\section{Background and Notation} \label{notation}

To explore how ground states can embed universal computation, we first address a related practical problem of ground state synthesis, i.e., given a set of desired states, is it possible to engineer a Hamiltonian whose set of ground states correspond to those in the desired set? In particular, when reality dictates certain limits on the interactions available, what are the corresponding restrictions on the possible ground states that can be achieved? For example, denote the state of each spin by either 0 or 1, is it possible to find a Hamiltonian with ground states given by $\{000, 011, 101, 110\}$? If so, is it possible to engineer this Hamiltonian from Ising interactions? The solution to the above question gives us the tools to engineer a set of states that are capable of encoding a universal circuit.\\

Let us first define the nomenclature used in this paper. Denote the state of each spin by either $0$ or $1$. A system of $n$ spins is described by a binary number $\mathbf{b} = b_1\ldots b_n \in \mathbb{Z}_2^n$, where $b_i \in \{0,1\}$ denotes the state of the $i^{th}$ spin. Given a state $\mathbf{b}$, we make the following definitions:

\begin{itemize}
 \item \textbf{Weight}: $|\mathbf{b}|$ is the number of $1$s in $\mathbf{b}$.
 \item  \textbf{1-Sites} Ones(\textbf{b}) is the set of indices whose corresponding spins are $1$. $\mathrm{Ones}(\mathbf{b}) := \{i: \, b_i = 1\}$.
 \item \textbf{Descendant} $\mathbf{a}$ is a descendent of $\mathbf{b}$ iff $\mathrm{Ones}(\mathbf{a}) \subseteq \mathrm{Ones}(\mathbf{b})$, i.e., the $1$-sites of $\mathbf{a}$ are subsets of $1$-sites of $\mathbf{b}$. We write this as a partial order, $\mathbf{a} \preceq \mathbf{b}$. $\mathrm{Dsc}(\mathbf{b})$ defines the set of all descendants of $\mathbf{b}$, and $\mathrm{Dsc}(\mathbf{b},k) := \{\mathbf{a}: \mathbf{a} \preceq \mathbf{b}, |\mathbf{a}| = k\}$ are all descendants of $\mathbf{b}$ with weight $k$.
\end{itemize}

A Hamiltonian on this system is defined by a function $H: \mathbb{Z}_2^n \rightarrow \mathbb{R}$ that maps each state of the system to a corresponding energy. A general Hamiltonian is of the form:
\begin{equation}\label{eqn:bob}
H(b_1,\ldots,b_n) =
\sum_{\mathbf{\mathbf{a}} \in \mathbb{Z}_2^n}c_\mathbf{\mathbf{a}}b_1^{a_1}b_2^{a_2}\ldots
b_n^{a_n},
\end{equation}

\noindent where $\mathbf{a} = a_1a_2\ldots a_n \in \mathbb{Z}_2^n$, $a_i \in \{0,1\}$, $c_{\mathbf{a}}$ are arbitrary constants, and the summation is taken over all binary strings of length $n$. Since we can always choose a labeling of the spin states such that one of the ground state corresponds to $\mathbf{0}$, we assert that $\mathbf{0}$ is a ground state of $H$ (i.e. $H(\mathbf{0}) = 0$) without loss of generality.

A Hamiltonian $H$ is $m$-body if it does not contain interactions involving $m+1$ spins or greater, i.e: $c_{\mathbf{a}} = 0$ $\forall \mathbf{a}$ such that $|\mathbf{a}| > m$. The general Ising model with an external magnetic field is a $2$-body Hamiltonian of the form \cite{Lee87a}:
\begin{equation}
 H = \sum c_{jk} b_jb_k  + \sum M_j b_j,
\end{equation}

\noindent where $c_{jk}$ are the interaction energies between spins $j$ and $k$, and $M_j$ describes the external field at site $j$.

Interaction graphs provide a convenient tool to visualize Ising Hamiltonians. Given a system of $n$ spins, we associate with it a graph of $n$ vertices where each spin corresponds to a single vertex. We draw an edge between two vertices $v_i$ and $v_j$ if the interaction energy between them, $c_{jk}$ is non-zero. A square Ising model of size $N$ is described by an interaction graph with vertices $v_{j,k}$ where $j,k = 1,\ldots,N$, with edge set $E = \{(v_{j,k},v_{j+1,k}), (v_{j,k},v_{j,k+1})\}$ with $j,k = 1,\ldots,N+1$.

The main idea of our approach is as follows. To embed a binary function on two bits $b_{out} = f(b_1,b_2)$, we construct a Hamiltonian $H_f$ on $b_1$,$b_2$,$b_{out}$ with the ground state set
\begin{equation}
 \mathcal{G}_{f} = \{00f(00),01f(01),10f(10),11f(11)\}.
\end{equation}
We see that each element of $\mathcal{G}_{f}$ satisfy $b_{out} = f(b_1,b_2)$. We define the spins in state $b_1$ and $b_2$ as input spins, and the bit in state $b_{out}$ as the output spin. We say that the ground state $\mathcal{G}_{f}$ \emph{encodes} $f$.

We can then evaluate the action of $f$ on particular input, i.e., $f(x,y)$, by introducing the external biases on the input spins that breaks the degeneracy of $H_f$ such that the state $x,yf(x,y)$ has lower energy than the other elements of $\mathcal{G}$. For example, the Hamiltonian $H_{f(00)} = H_f + b_1 + b_2$ would have the unique ground state $\{00f(00)\}$. Therefore, cooling such a system to ground state would allow us evaluate $f(0,0)$, and the computational task of solving for a ground state of this system is at least as hard as evaluate $f(0,0)$.

\section{Ground State Synthesis}  \label{designer}

This motivates the problem of ground state synthesis, i.e., given a set of desired states, is it possible to engineer an $m$-body Hamiltonian with a coinciding set of ground states, and if so, how? The answer of this question can be directly applied to \textit{designer ground states}, a set of ground states $G_f$ specifically designed to encode a desired binary function $f$. Should we be able to construct $m$-body Hamiltonians for arbitrary $f$, we can establish the universality of the Ising model.

We can represent $H(\mathbf{b})$ and  $c_{\mathbf{b}}$ as vectors in $\mathbb{R}^{2n}$, where their components are indexed by all possible values of $\mathbf{b} \in \{0,1\}^{n}$. Eq.~(\ref{eqn:bob}) implies that $H(\mathbf{b}) = L c_{\mathbf{b}}$, where $L$ is some invertible linear map. Thus, the restriction of $H$ to $m$-body interactions leads to a set of linear equations that constrain $H(\mathbf{b})$. More precisely, $H$ is an $m$-body Hamiltonian iff for each $H(\mathbf{b})$ with $|\mathbf{b}| = k > m$,
\begin{align}\label{eqn:hrelmain}
H(\mathbf{b}) &= \sum_{p=1}^m a_p \left[ \sum_{\mathbf{d} \in
\mathrm{Dsc}(\textbf{b},p)} H(\mathbf{d}) \right],
\end{align}

\noindent where $a_p$ is given by the recurrence relation (see appendix):
\begin{align}
a_p =  \left \{\begin{array}{ll} 1 & p = m\\
1 - \sum_{j=1}^{m-p}a_{p+j}\ncr{|\mathbf{b}|-p}{j} & 1 \leq p < m\\
\end{array} \right. \label{eqn:arel}
\end{align}

This leads immediately to constraints on the ground state set $\mathcal{G}$ if it can be $m$-synthesized:

\begin{theorem}\label{cornm}
Suppose $H$ is an $m$-body Hamiltonian on a system of $n$ spins. For each $\mathbf{b}$ with $|\mathbf{b}| = k > m$, define the sets $\mathcal{A} = \{\mathbf{b}\} \cup \mathrm{Dsc}(\mathbf{b},m-1) \cup \mathrm{Dsc}(\mathbf{b},m-3) \cup \ldots $ and  $\mathcal{B} = \mathrm{Dsc}(\mathbf{b},m) \cup \mathrm{Dsc}(\mathbf{b},m-2) \cup \ldots$ Then the ground state set $\mathcal{G}$ of $H$ must satisfy:
\begin{align}\mathcal{A} \subset \mathcal{G} \Leftrightarrow \mathcal{B} \subset \mathcal{G}
\end{align}
for every $\mathbf{b}$ with $k > m$.
\end{theorem}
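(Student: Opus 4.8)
The plan is to combine the linear relation in Eq.~(\ref{eqn:hrelmain}) with the fact that every energy is non-negative. Because $\mathbf{0}$ is a ground state with $H(\mathbf{0})=0$ and the ground states are the global minimizers, we have $H(\mathbf{s})\geq 0$ for all $\mathbf{s}$, with $\mathbf{s}\in\mathcal{G}$ precisely when $H(\mathbf{s})=0$. The elementary fact I would lean on is that, for any finite set $S$ and any strictly positive weights $w_{\mathbf{s}}>0$, the combination $\sum_{\mathbf{s}\in S}w_{\mathbf{s}}H(\mathbf{s})$ vanishes if and only if $S\subset\mathcal{G}$, since a sum of non-negative terms is zero exactly when every term is zero. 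The theorem will then follow by exhibiting one rearranged instance of Eq.~(\ref{eqn:hrelmain}) whose two sides are positively weighted energy sums over $\mathcal{A}$ and over $\mathcal{B}$ respectively.

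First I would pin down the signs of the coefficients $a_p$ in Eq.~(\ref{eqn:arel}). Solving that recurrence by downward induction on $p$ yields the closed form $a_p=(-1)^{m-p}\binom{k-p-1}{m-p}$ with $k=|\mathbf{b}|$; the inductive step reduces to the binomial identity $\sum_{j=0}^{q}(-1)^{q-j}\binom{s}{j}\binom{s-j-1}{q-j}=1$, provable via Vandermonde's identity. What I actually need from this is only the sign and non-vanishing: since $k>m$ we have $\binom{k-p-1}{m-p}\geq 1$, so $a_p\neq 0$, and $\sgn(a_p)=(-1)^{m-p}$. Thus $a_p>0$ exactly on the weights $p=m,m-2,\ldots$ that define $\mathcal{B}$, and $a_p<0$ exactly on the weights $p=m-1,m-3,\ldots$ which, together with $\mathbf{b}$ itself, make up $\mathcal{A}$.

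Next I would transpose the negative terms of Eq.~(\ref{eqn:hrelmain}) to the left alongside $H(\mathbf{b})$, obtaining
\[
H(\mathbf{b})+\sum_{m-p\ \mathrm{odd}}|a_p|\sum_{\mathbf{d}\in\mathrm{Dsc}(\mathbf{b},p)}H(\mathbf{d})=\sum_{m-p\ \mathrm{even}}a_p\sum_{\mathbf{d}\in\mathrm{Dsc}(\mathbf{b},p)}H(\mathbf{d}).
\]
Both sides are now strictly positively weighted sums of energies: the left side runs over the elements of $\mathcal{A}$ (the weight attached to $H(\mathbf{b})$ being $1$) and the right side over the elements of $\mathcal{B}$. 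If the parity of $m$ formally places the weight-zero descendant $\mathbf{0}$ into $\mathcal{A}$ or $\mathcal{B}$, this is harmless because $H(\mathbf{0})=0$. By the observation above, the left side vanishes iff $\mathcal{A}\subset\mathcal{G}$ and the right side vanishes iff $\mathcal{B}\subset\mathcal{G}$; as the two sides are equal, one vanishes iff the other does, which is exactly the claimed equivalence $\mathcal{A}\subset\mathcal{G}\Leftrightarrow\mathcal{B}\subset\mathcal{G}$.

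I expect the main obstacle to be that first step: establishing the alternating sign of $a_p$ from the recurrence Eq.~(\ref{eqn:arel}), and verifying the parity bookkeeping that matches the positive coefficients to the weight classes of $\mathcal{B}$ and the negative ones (plus $\mathbf{b}$) to those of $\mathcal{A}$. Once the sign pattern is secured, the non-negativity argument closes the proof with no further computation.
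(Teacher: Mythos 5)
Your proof is correct and takes essentially the same route as the paper's: rearrange Eq.~(\ref{eqn:hrelmain}) so that both sides become positively weighted energy sums over $\mathcal{A}$ and $\mathcal{B}$ respectively, then use that $H(\mathbf{s})\geq 0$ with equality exactly on $\mathcal{G}$ to conclude each side vanishes iff its set lies in $\mathcal{G}$. The only difference is one of detail: the paper merely asserts that $a_p$ alternates in sign, whereas you substantiate this with the closed form $a_p=(-1)^{m-p}\binom{k-p-1}{m-p}$, which also pins down that the positive coefficients fall exactly on the weights defining $\mathcal{B}$ and the negative ones on those defining $\mathcal{A}$.
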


\begin{proof}Observe that $a_p$ alternates signs for each value of $p$ in Eq.~(\ref{eqn:arel}), thus we can write Eq.~$(\ref{eqn:hrelmain})$ in the form $\sum_{\mathbf{b} \in \mathcal{A}} c_{\mathbf{b}} H(\mathbf{b}) = \sum_{\mathbf{b} \in \mathcal{B}} c_{\mathbf{b}} H(\mathbf{b})$. If $\mathcal{A} \subset \mathcal{G}$, then the left hand of this equation is $0$. Since $H(\mathbf{b}) \geq 0$ by assumption, it follows that the right hand side must also be $0$, and vice versa. \qed
\end{proof}\\

This theorem immediately implies that restrictions to $m$-body Hamiltonians, for any $m$, will also restrict the sets of ground states that we can synthesize. In particular, an $m$-body can only implement $m$-wise correlations. Consider for example the case of an $n$-body system, then any ground state set $\mathcal{G}$ that does not satisfy
\begin{equation}\label{eqn:wholebody}
\{\mathbf{b}: \mathrm{wt}(\mathbf{b}) \textrm{ odd} \} \subseteq
\mathcal{G} \Leftrightarrow \{\mathbf{b}: \mathrm{wt}(\mathrm{b})
\textrm{ even}\} \subseteq \mathcal{G}
\end{equation}
can only be synthesized by a Hamiltonian with all $n$ bodies interacting together. One observes that the ground state set corresponding to the parity function on a binary string (i.e: $f(\mathbf{b}) = |\mathbf{b}| \mod 2$) violates the above condition, and hence cannot be simulated by any $2$-body Hamiltonian. Thus, we cannot simulate all binary functions directly.

The above problem can be circumvented by introducing ancillae, additional bits within the Ising lattice that are not designated as either input or output bits. For example, consider simulation of the NAND gate, defined by $\mathrm{NAND}(b_1,b_2) = (b_1 \otimes b_2) \oplus 1$, where all arithmetic is done modulo $2$. Directly, a Hamiltonian $H_{NAND}$ with ground state set $\mathcal{G}_{\mathrm{NAND}} = \{001, 011, 101, 110\}$ simulates NAND. However, NAND can also be simulated any Hamiltonian on $k + 3$ spins, with a ground state set of the form  $\mathcal{G} = \{00\mathbf{s}_{00}1, 01\mathbf{s}_{01}1, 10\mathbf{s}_{10}1, 11\mathbf{s}_{11}0\}$, where each $\mathbf{s}_{ij}$ denote binary strings of length $k$.

Now consider binary functions $f$, $g$, $h$, simulated by Hamiltonians $H_f,H_g,H_h$, with outputs $b_f$, $b_g$ and $b_h$. The functional composition $f(g(b_1,b_2),h(b_3,b_4))$ on the four input bits $b_i$ where $i = 1,\ldots 4$, can be simulated by the Hamiltonian $H_g(b_1,b_2,b_g) + H_h(b_3,b_4,b_h) + H_f(b_g,b_h,b_{out})$, where $b_g$ and $b_h$ are introduced as ancillae.

\section{Universality of Ising Ground States} \label{universality}

An arbitrary Boolean circuit that takes $n$ input bits and maps them to $m$ output bits can be decomposed a basic logic circuit composed of the following components: \textit{Wires} that takes a spin as input, and copies its state to a neighboring spin; and NAND Gate that can generate all Boolean functions. These require the synthesis of the following ground state sets $\mathcal{G}_{WIRE} = \{00, 11\}$ and $\mathcal{G}_{NAND} = \{001, 011, 101, 110\}$ (In standard literature, the FANOUT gate that copies an input bit onto two outputs spins is also normally required. However, this operation can be decomposed in spin systems into two wires that connect to the same input spin.)

We the convert this to a \emph{planar circuit}, that is, one in which no wires may intersect. This requires the replacement of each section where a wires intersects with a SWAP gate, $\mathrm{SWAP}(b_1,b_2) = (b_2,b_1)$. We observe that this operation can be decomposed into a network of three XOR gates i.e: $\mathrm{SWAP}(b_1,b_2) = \mathrm{XOR}_1(\mathrm{XOR}_2(\mathrm{XOR}_1(b_1,b_2)))$, where $\mathrm{XOR}_1(b_1,b_2) = (b_1 \oplus b_2, b_2)$ and $\mathrm{XOR}_2(b_1,b_2) = (b_1 , b_1 \otimes b_2)$. We call this the planar circuit representation of $f$.

Therefore, we can construct a square Ising Hamiltonian that synthesizes $f$ provided there exists square Ising Hamiltonians that implement each of basic aforementioned components, i.e., (1) wires, (2) NAND gates and (3) XOR gates. To see that each of these can be simulated by a $2$-body Hamiltonian, we prove the following lemma:

\begin{lemma}\label{thrm:32problem}Given a set of states $\mathcal{G}$ on a system of three spins with $000 \in \mathcal{G}$, there exists a $2$-body Hamiltonian that synthesizes $\mathcal{G}$ if and only if
\begin{equation}\label{eqn:32rel}
\{111\} \cup \mathrm{Desc}(111,1) \subseteq \mathcal{G} \Leftrightarrow  \mathrm{Desc}(111,2) \subseteq \mathcal{G}.
\end{equation}
\end{lemma}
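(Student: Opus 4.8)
The plan is to reduce the lemma to a single scalar linear constraint on the energy vector and then dispatch the two directions separately. First I would note that a $2$-body Hamiltonian on three spins is $H(\mathbf{b}) = \sum_{\mathbf{a}\preceq\mathbf{b}} c_{\mathbf{a}}$ with the lone $3$-body coefficient $c_{111}$ set to zero. Inverting the linear map $L$ of Eq.~(\ref{eqn:bob}) — equivalently, specializing Eq.~(\ref{eqn:hrelmain}) to $m=2$, $\mathbf{b}=111$ — writes $c_{111}$ as a signed sum of the energies, so a function $H$ with $H(000)=0$ is realized by a $2$-body Hamiltonian if and only if
\[
H(111) + \sum_{\mathbf{d}\in\mathrm{Desc}(111,1)} H(\mathbf{d}) = \sum_{\mathbf{d}\in\mathrm{Desc}(111,2)} H(\mathbf{d}).
\]
Moreover, since $H(000)=0$ and $000\in\mathcal{G}$, the minimum energy is forced to be $0$, so ``$H$ synthesizes $\mathcal{G}$'' is equivalent to $H\ge 0$ everywhere with $H(\mathbf{b})=0$ exactly on $\mathcal{G}$. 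Writing $\mathcal{A}=\{111\}\cup\mathrm{Desc}(111,1)$ and $\mathcal{B}=\mathrm{Desc}(111,2)$, which together exhaust the seven nonzero states, the constraint is simply the balance $\sum_{\mathbf{a}\in\mathcal{A}}H(\mathbf{a})=\sum_{\mathbf{b}\in\mathcal{B}}H(\mathbf{b})$.

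The necessity (``only if'') direction is then immediate and is nothing but Theorem~\ref{cornm} in this case: if $\mathcal{A}\subseteq\mathcal{G}$ the left side of the balance vanishes, hence so does the right, and as each $H(\mathbf{b})\ge 0$ every weight-two term must individually vanish, placing $\mathrm{Desc}(111,2)\subseteq\mathcal{G}$; the symmetric argument gives the converse, so Eq.~(\ref{eqn:32rel}) holds.

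For sufficiency I would assume Eq.~(\ref{eqn:32rel}) and construct admissible energies explicitly. If $\mathcal{A}\subseteq\mathcal{G}$ then also $\mathcal{B}\subseteq\mathcal{G}$, so $\mathcal{G}$ contains all seven nonzero states together with $000$, i.e.\ $\mathcal{G}=\mathbb{Z}_2^3$, and the trivial Hamiltonian $H\equiv 0$ works. Otherwise $\mathcal{A}\not\subseteq\mathcal{G}$, and by Eq.~(\ref{eqn:32rel}) also $\mathcal{B}\not\subseteq\mathcal{G}$, so both $\mathcal{A}\setminus\mathcal{G}$ and $\mathcal{B}\setminus\mathcal{G}$ are nonempty. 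Setting $H=0$ on $\mathcal{G}$ kills every $\mathcal{G}$-term and collapses the balance to $\sum_{\mathbf{a}\in\mathcal{A}\setminus\mathcal{G}}H(\mathbf{a})=\sum_{\mathbf{b}\in\mathcal{B}\setminus\mathcal{G}}H(\mathbf{b})$ over two nonempty index sets. Assigning each surviving $\mathcal{A}$-term the value $|\mathcal{B}\setminus\mathcal{G}|$ and each surviving $\mathcal{B}$-term the value $|\mathcal{A}\setminus\mathcal{G}|$ makes both sides equal to $|\mathcal{A}\setminus\mathcal{G}|\cdot|\mathcal{B}\setminus\mathcal{G}|>0$, with all off-$\mathcal{G}$ energies strictly positive; inverting $L$ on these seven values returns the desired $2$-body Hamiltonian.

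The only genuinely delicate point is the reduction itself — verifying that the $2$-body restriction on three spins imposes exactly one linear constraint on the energy vector, and that this constraint is precisely the $\mathcal{A}$--$\mathcal{B}$ balance. Once that is in hand both directions are short: necessity is a one-line positivity argument and the sufficiency construction is routine, since a single constraint with guaranteed positive slack on each side leaves ample freedom to realize any admissible zero-pattern.
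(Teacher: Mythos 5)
Your proposal is correct and follows essentially the same route as the paper's own proof: reduce the $2$-body condition on three spins to the single balance constraint $\sum_{\mathbf{a}\in\mathcal{A}}H(\mathbf{a})=\sum_{\mathbf{b}\in\mathcal{B}}H(\mathbf{b})$, get necessity from positivity of $H$ (Theorem~\ref{cornm}), and get sufficiency by setting $H=0$ on $\mathcal{G}$ and distributing equal total weight over $\mathcal{A}\setminus\mathcal{G}$ and $\mathcal{B}\setminus\mathcal{G}$ (your values $|\mathcal{B}\setminus\mathcal{G}|$ and $|\mathcal{A}\setminus\mathcal{G}|$ are just a rescaling of the paper's $1/|\mathcal{A}\setminus\mathcal{G}|$ and $1/|\mathcal{B}\setminus\mathcal{G}|$). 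If anything, your write-up is slightly more careful than the paper's, since you state explicitly that Eq.~(\ref{eqn:32rel}) forces both difference sets to be nonempty simultaneously and that the Hamiltonian vanishes on $\mathcal{G}$, points the paper leaves implicit.
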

\begin{proof}The forward direction is special case of Eq.~(\ref{eqn:wholebody}) for $n = 3$. To observe the converse, assume Eq.~(\ref{eqn:32rel}) is true. Eq.~(\ref{eqn:hrelmain}) implies that $H$ is a $2$-body Hamiltonian iff $H$ satisfies:
\begin{equation}\label{eqn:H32rel}
\sum_{\mathbf{b} \in \mathcal{A}}H(\mathbf{b}) = \sum_{\mathbf{b} \in \mathcal{B}} H(\mathbf{b}),
\end{equation}
where $\mathcal{A}  = \{111\} \cup \mathrm{Desc}(111,1)$ and $\mathcal{B} = \mathrm{Desc}(111,2)$. To see that Eq.~($\ref{eqn:H32rel}$) is true, observe that if $\mathcal{A}, \mathcal{B} \subseteq \mathcal{G}$ then Eq.~(\ref{eqn:H32rel}) is satisfied trivially. Otherwise, construct the Ising Hamiltonian that has assignments
\begin{equation}
 H(\mathbf{b}) = \frac{1}{|\mathcal{A}/\mathcal{G}|}, \  \  \  \  \  \  H(\mathbf{d}) = \frac{1}{|\mathcal{B}/\mathcal{G}|},
\end{equation}
for all $\mathbf{b} \in \mathcal{A}/\mathcal{G}$, $\mathbf{d} \in \mathcal{B}/\mathcal{G}$. Here $|\mathcal{A}/\mathcal{G}|$ is the number of elements that lie in $\mathcal{A}$ but outside $\mathcal{G}$. \qed
\end{proof}\\

The above lemma gives us a method to construct all the elements of a universal circuit from $2$-body nearest neighbor Hamiltonians. Wires can be simulated through $H_{I} = b_1 + b_2 - 2b_1b_2$. Lemma \ref{thrm:32problem} implies that the NAND can be simulated directly (to see NAND can be simulated, relabel the third bit). XOR cannot be implemented by the ground state of a $2$-body Hamiltonian on three spins. However, the Hamiltonian on four spins
\begin{align}\nonumber
H_{XOR}(b_1,b_2,b_A,b_o) &= (4 b_A - 3)(b_1 + b_2 + b_o) - 4 b_A\\
                    & + 2(b_1b_2 + b_2b_o + b_1b_o) + 4
\end{align}
with ground states $\{0010,0101,1001,1100\}$ simulates XOR using $b_A$ as an ancilla. Thus, all the above gates can be simulated by two-body Hamiltonians. Since these Hamiltonians also involve at most four spins, their interaction graphs must also be planar with vertices of degree at most three. Thus they can all be embedded in a square Ising Lattice with additional ancillae (See Fig. \ref{fig:ising_mapping}), and hence so can $f$.

Finally, we observe that each gate can be simulated by a Hamiltonian on at most $k$ spins, where $k$ is a fixed number. Thus, the number of spins used to simulate $f$ is at most some polynomial for the number of logic gates used to construct $f$. Therefore, the square Ising model can simulate an arbitrary circuit efficiently.

\begin{figure}[htp]
\centering
\includegraphics[width=0.45\textwidth]{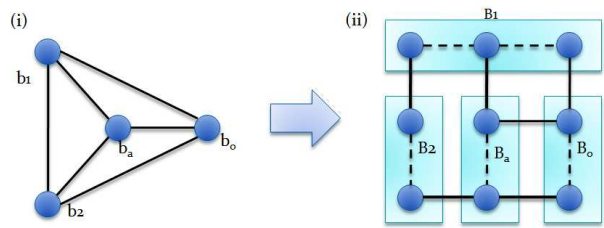}
\caption{The Hamiltonian that synthesizes the XOR Gate, $H_{XOR}$, with its corresponding interaction graph (i) can be embedded into a $3\times 3$ square Ising Lattice. (ii)  Each original spin $b_i$ is mapped to a set of spins $B_i$ which are linked by $H_{WIRE}$ interactions. At the ground state, all spins in each set $B_i$ are of the same state, and hence behave as if they are a single bit.}\label{fig:ising_mapping}
\end{figure}

\begin{theorem}\label{thm:circuit_gstate} Consider an arbitrary binary function $f$. There always exists a square Ising Hamiltonian $H$ whose ground states encode $f$.
\end{theorem}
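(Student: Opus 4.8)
The plan is to assemble the Hamiltonian for $f$ out of the gate Hamiltonians already constructed, using the functional composition rule established at the end of Section \ref{designer}. First I would invoke the universality of NAND to write $f$ as a Boolean circuit built entirely from NAND gates together with FANOUT operations (the latter decomposing into wires that share a common input spin). I would then convert this circuit into its planar circuit representation: every point at which two wires cross is replaced by a SWAP gate, and each SWAP is expanded into the three XOR gates given above. The result is a planar network whose only components are wires, NAND gates and XOR gates, each of which we have already shown admits a $2$-body Ising Hamiltonian ($H_{I}$, $H_{NAND}$ and $H_{XOR}$ respectively) whose ground state set encodes the corresponding local relation, normalized so that the ground state energy is $0$.

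Next I would form the total Hamiltonian $H$ as the sum of the individual gate Hamiltonians, identifying the output spin of each gate with the input spin of the gate it feeds, and treating all internal wire spins and the XOR ancillae as ancillae in the sense of Section \ref{designer}. The key step is to verify that the ground states of $H$ encode $f$. This follows because each summand $H_{\mathrm{gate}}(\mathbf{b}) \geq 0$ with equality precisely on that gate's own ground state set; hence $H(\mathbf{b}) \geq 0$, and $H(\mathbf{b}) = 0$ if and only if every gate constraint is simultaneously satisfied, i.e. every wire carries a consistent value and every gate correctly relates its inputs to its output. The configurations attaining energy $0$ are therefore exactly those in which the ancillae record a valid execution of the circuit on some input assignment, so projecting onto the designated input and output spins recovers the graph of $f$.

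Finally I would embed the resulting interaction graph into a square lattice. Since every gate Hamiltonian involves at most four spins and each gate's interaction graph is planar of maximum degree three, the full network is planar with bounded degree. Using chains of $H_{I}$ interactions, each logical spin is expanded into a connected set $B_i$ of physical lattice spins forced to agree in the ground state, and routing these chains on the grid (as in Fig. \ref{fig:ising_mapping}) realizes the required connectivity. Because each gate consumes a bounded number of spins and the number of gates is polynomial in the size of the circuit for $f$, the construction uses only polynomially many spins.

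The main obstacle I expect is the ground state verification in the second step: one must argue that summing the gate Hamiltonians creates no spurious configuration of energy below the intended level and that the degeneracy over ancillae never projects to an incorrect output. This reduces to the observation that each summand is independently nonnegative and vanishes only on its own ground set, so their sum vanishes exactly on the intersection of all gate constraints; the delicate point is confirming that this intersection, after projection, is neither empty nor strictly larger than the graph of $f$, which is precisely where the consistency of the wire and SWAP decompositions must be invoked.
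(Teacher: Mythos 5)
Your proposal is correct and follows essentially the same route as the paper: NAND universality plus planarization via SWAP decomposed into three XOR gates, the gate Hamiltonians $H_{I}$, $H_{\mathrm{NAND}}$, $H_{XOR}$, composition by summing Hamiltonians with shared spins and ancillae, and embedding into the square lattice using planarity, bounded degree, and wire chains. The only difference is that you spell out the ground-state verification (each nonnegative summand vanishes exactly on its own constraint set, so the sum vanishes exactly on the intersection), a step the paper leaves implicit in its composition rule at the end of Section \ref{designer}.
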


The above theorem allows us to encode any logic circuit, and thus computational task, into the ground state of an Ising Hamiltonian. Not only is it remarkable that the ground state of such simple lattices are capable of simulating all physical processes, but this fact also allows us to apply the many results of computational complexity directly onto the task for computing ground states for an Ising Hamiltonian.

\section{Computational Complexity and Emergence}  \label{complexity}

Any Boolean function $f$ can be encoded as the ground state of an Ising Hamiltonian $H_f$. Suppose now that $f$ is intractable, then the Strong Church-Turing thesis would necessarily imply that computing a ground state of $H_f$ would also be intractable. In fact, the assertion is stronger. Since we can potentially encode the output of $f$ in the state of any spin state, the process of determining the ground state of any particular spin would also be intractable. In this final section, we will use the above intuition to provide lower bounds on the computational difficulty of the ground state problem, i.e., finding the ground state of some suitable two-dimensional, nearest neighbor Ising Hamiltonian.

In computational complexity \cite{Papadimitriou94a}, NP denotes the class of problems whose solutions can be verified, but not necessarily found, in polynomial time. It encapsulates many computational tasks that we would like to be able to solve efficiently, such as prime factoring and the traveling salesman problem \cite{Lawler87}. The hardest of such problems lie in the class NP-complete. Should any NP-complete problem be solved efficiently, then it could be used as a subroutine to efficiently solve all problems in NP and imply that $\mathrm{P} = \mathrm{NP}$. While, this remains one of the biggest theoretical questions in computer science, popular opinion tends to favor that P is distinct from NP, and hence efficient solutions of NP-complete are unlikely.

One particular well known NP-complete problem is the circuit satisfiability (CSAT) problem \cite{cook71a}: given a circuit with $n$ input bits and a single output bit described by a binary function $f$, is there a set of inputs such that the output is $1$? Consider a given CSAT problem with a circuit $f$. Theorem \ref{thm:circuit_gstate} implies that we can construct a Hamiltonian $H_f$ together with a predefined output bit $b_o$ such that $b_o = f(x)$ for any ground state of $H_f$. Since we can modify the any Hamiltonian by a constant without affecting its set of ground states, we can always choose $H_f$ such that its ground state energy is $0$.

Consider the ground state decision problem, does there exist a state with energy at most $0$ under the Hamiltonian $H'_f = H_f + (1 - b_{out})$? The perturbation $1 - b_{out}$ lifts the degeneracy in $H_f$ such that the resulting Hamiltonian $H'_f$ will have a zero energy state iff there is a set of inputs to $f$ such that it outputs $1$. Therefore, knowledge of the ground state of $H_f$ and hence $b_o$ clearly allows one to solve CSAT. Therefore, the ground state decision problem is at least NP-hard. Furthermore, since $H_f$ is a Hamiltonian on a square Ising lattice that grows at most polynomially with the size of the circuit, it is easy to check whether the energy of a given state is greater than $0$. Thus \emph{the ground state decision problem is NP-complete}.

We see that the above result, originally derived by Barahona \cite{barahona82a}, flows as a natural consequence of applying Ising lattices to solve a particular NP-complete problem. It is stimulating to speculate then, what other important results could be obtained by applying the Ising model to other non-trivial computational problems. The Halting problem \cite{Turing36a} is an exciting candidate; it and its generalizations \cite{Rice53} prove that there exist many properties of Turing machines that are undecidable. Such properties would necessarily correspond to certain properties of the Ising model, and it would be interesting to see if these properties are physically relevant.

Another promising avenue of research is to consider what the limitations on the computation of ground states imply about the macroscopic properties of the resulting Ising lattice. For example, it is easy to see how our results can be extended to show that computing the correlation length of such Ising lattices is also NP-complete. This leads to the concept of emergence in the infinite case in \cite{Gu09a}, following the path established by P. Anderson in 1972 with his celebrated paper `More is Different' \cite{Anderson72a}, where he postulated that the ground state of a spin glass may be non-computable.

Emergent properties of a physical system are properties which arise from the whole and are not deducible from the physical interactions of the component parts. In `More Really Is Different' \cite{Gu09a}, a special case of this technique was applied to show that certain macroscopic properties of a properly chosen, 2-dimensional, infinite periodic Ising lattice are emergent. That is, it is possible to embed universal circuits within infinite periodic Ising lattices, such that should certain macroscopic properties be computed, one would be able to decided whether a arbitrary computer program would halt. The result naturally motivated the question: ``What would happen should such lattices be finite?''. In this paper we see that in such sceneries these emergent macroscopic properties are connected with the known NP-complete properties of finite lattice Ising spin glasses. This relation (infinite $\rightarrow$ undecidable, finite $\rightarrow$ NP-complete) was previously proved as well in planar tiling problems \cite{Garey79a}, what suggests that it could be a common feature of complex universal systems.

\section{Conclusion}  \label{conclusion}

We have derived the general conditions for a desired set of states to be the ground state of a classical Hamiltonian constrained to interact with a finite number of spins\textemdash including $2$-body interactions, i.e., the Ising Model. We have presented a new and simple way of encoding universal circuit computation in the ground states of Ising lattices through the construction of Ising blocks that implement the necessary logical gates and connections. This result can be immediately applied to derive a simple version of Barahona's original proof \cite{barahona82a} that the problem of finding states on Ising Hamiltonians is, in general, NP-complete.\\

We thank Michael Nielsen for suggesting us the topic of this work and acknowledge fruitful discussions with Christian Weedbrook, Jacob Biamonte and Chip Neville.


\bibliographystyle{apsrev4-1}

%

\begin{appendix}

\section{Derivation of equation (\ref{eqn:arel})} We first define
\begin{align}
g^\mathbf{b}_c(a_1,a_2,\ldots,a_k)&= \sum_{p=1}^k a_p \left[ \sum_{\mathbf{d} \in \mathrm{Dsc}(b,p)} H(\mathbf{d}) \right]
\end{align}

Thus for any $\mathbf{b}$ such that $\|\mathbf{b}\| > m$, we have the relation:
\begin{align}\label{eqn:hb}
H(\mathbf{b}) &= g^{\mathbf{b}}_c\left(1,1,\ldots,\alpha_m = 1, 0, \ldots,\alpha_k = 0\right) \nonumber \\
&= \sum_{k=m+1}^{\|\mathbf{b}\|-1} \left( \sum_{\mathbf{d} \in \mathrm{Dsc}(\mathbf{b},k)}c_{\mathbf{d}} \right) \nonumber \\
& + g^{\mathbf{b}}_c\left(1,1,\ldots,\alpha_m = 1, 0, \ldots,\alpha_k = 0\right)
\end{align}

Now, we note the fact
\begin{equation}
\sum_{\mathbf{d} \in \mathrm{Dsc}(b,m)} H(\mathbf{d}) =
g^{\mathbf{b}}_c\left(\beta_1,\beta_2,\ldots,\beta_{m-1},0\ldots,0\right)
\end{equation}

To compute $\beta_j$, Consider $H(\mathbf{d})$ which has exactly $^mC_j$ terms of the form $c_\mathbf{d'}$ with $\|\mathbf{d'}\| = m$. Also there exists $^{\|\mathbf{b}\|}C_m$ terms of the form $H(\mathbf{d})$. Thus to total number $c_\mathbf{d'}$ terms is $^{\|\mathbf{b}\|}C_m ^mC_j$. Dividing this by the total number of
$c_\mathbf{d'}$ that are descendant from $\mathbf{b}$ gives:
\begin{align}
\beta_j =
\frac{\ncr{\|\mathbf{b}\|}{m}\ncr{m}{j}}{\ncr{\|\mathbf{b}\|}{j}} = \ncr{\|\mathbf{b}\|-j}{\|\mathbf{b}\|-m} \qquad 1\leq j \leq m
\end{align}

So that
\begin{align}
\sum_{\mathbf{d} \in \mathrm{Dsc}(b,m)} H(\mathbf{d}) & =  g^{\mathbf{b}}_c \bigg[ \ncr{k-1}{k-m},\ncr{k-2}{k-m},\ldots \nonumber  \\
&  \ldots ,k-m+1,1,0\ldots,0 \bigg]
\end{align}

Substituting into Eq.\ (\ref{eqn:hb})
\begin{align}
H(\mathbf{b})= & \sum_{\mathbf{d} \in \mathrm{Dsc}(b,m)}
H(\mathbf{d}) + \nonumber \\
& g^{\mathbf{b}}_c\big[1-^{k-1}C_{k-m},1-^{k-2}C_{k-m},\ldots \nonumber \\
& \ldots,1-^{k-(m-2)}C_{k-m},-(k-m),0,\ldots,0\big],
\end{align}

we eliminate the $a_m$ term in the argument of $g^\mathbf{b}_c$. By writing:
\begin{align}
\sum_{\mathbf{d} \in \mathrm{Dsc}(b,m-1)} H(\mathbf{d}) = g^{\mathbf{b}}_c \bigg[ \ncr{k-1}{k-(m-1)},  \nonumber \\
\ncr{k-2}{k-(m-1)}, \ldots ,k-m+2,1,0\ldots,0 \bigg]
\end{align}

etc, we can eliminate each of $a_j$, $1\leq j \leq m$ recursively, and write out an equation for $H(d)$ entirely from the sum of its descendants:
\begin{equation}
H(\mathbf{b}) = g^\mathbf{b}_c(a_1,a_2,\ldots,a_j,\ldots,a_m = 1,0,\ldots,0)
\end{equation}

with
\begin{align}
a_m &= 1\\
a_{m-j} &= 1 - a_m\ncr{k-(m-j)}{k-m} - \nonumber \\
& a_{m-1}\ncr{k-(m-j)}{k-1} -  a_{m-2} \ncr{k-(m-j)}{k-2} -\nonumber \\
& \ldots - a_{m-(j-1)}(k-m-j)
\end{align}

substituting indices $p = m-j$, we get:
\begin{align}
a_p =& 1 - a_{p+1}(k-p) - a_{p+2}\ncr{k-p}{2} - \ldots \nonumber \\
& \ldots - a_{m}\ncr{k-p}{m-p} \qquad 1 \leq p < m
\end{align}

which is the recurrence relation featured. Thus, if $H$ is $m$-body, then the required equation is implied. Conversely, if Eq.\ (\ref{eqn:hb}) is satisfied, we have:
\begin{equation}
\sum_{k=m+1}^{\|\mathbf{b}\|-1} \left( \sum_{\mathbf{d} \in \mathrm{Dsc}(\mathbf{b},k)}c_{\mathbf{d}} \right) = 0 \qquad \forall \mathbf{b}:\, \|\mathbf{b}\| > m
\end{equation}

which has no non-trivial solutions.

\end{appendix}

\end{document}